\newtheorem{theorem}	 			{Theorem}[section]
\newtheorem{lemma}		[theorem]	{Lemma}
\newtheorem{corollary}		[theorem]	{Corollary}
\newtheorem{definition}
[theorem]	{Definition}}
\theoremstyle{break}
{\theorembodyfont{\rmfamily} }
\newenvironment{proof}{\noindent {\em {Proof:}}}{$\blacksquare$\vskip
\belowdisplayskip}
\newcommand{\mech}{\mathcal{M}}
\newcommand{\lowermech}{\mathcal{M}_l}
\newcommand{\uppermech}{\mathcal{M}_u}
\newcommand{{\biddingLang}}{\mathcal{B}}
\newcommand{\upperal}{x^u}
\newcommand{\loweral}{x^l}
\newcommand{\OPT}{\texttt{OPT}} 
\def\vec{\bm}
\def\realspos{\mathbb{R}_{+}}
\DeclareMathOperator*{\argmax}{argmax}
\date{}
\title{When Bidders Are DAOs}
\author{Maryam Bahrani\thanks{a16z crypto. Email:
    \texttt{mbahrani@a16z.com}.} \and Pranav Garimidi\thanks{a16z
    crypto. Email: \texttt{pgarimidi@a16z.com}.} \and Tim
  Roughgarden\thanks{Columbia University \& a16z crypto. 
Author's research at Columbia
  University supported in part by NSF awards CCF-2006737
and CNS-2212745.
Email: \texttt{troughgarden@a16z.com}. 
}}
\begin{document}
\maketitle

\begin{abstract}
    In a typical decentralized autonomous organization (DAO), people organize themselves into a group that is programmatically managed. DAOs can act as bidders in auctions (with ConstitutionDAO being one notable example), with a DAO's bid typically treated by the auctioneer as if it had been submitted by an individual, without regard to any
details of the internal DAO dynamics.

The goal of this paper is to study auctions in which the bidders are DAOs. 
More precisely, we consider the design of two-level auctions in which the "participants" are groups of bidders rather than individuals.  Bidders form DAOs to pool resources, but must then also negotiate the terms by which the DAO's winnings are shared.  We model the outcome of a DAO's negotiations through an aggregation function (which aggregates DAO members' bids into a single group bid) and a budget-balanced cost-sharing mechanism (that determines DAO members' access to the DAO's allocation and distributes the aggregate payment demanded from the DAO to its members).  DAOs' bids are processed by a direct-revelation mechanism that has no knowledge of the DAO structure (and thus treats each DAO as an individual).  Within this framework, we pursue two-level mechanisms that are incentive-compatible (with truthful bidding a dominant strategy for each member of each DAO) and approximately welfare-optimal.

We prove that, even in the case of a single-item auction, the DAO dynamics hidden from the outer mechanism preclude incentive-compatible welfare maximization: No matter what the outer mechanism and the cost-sharing mechanisms used by DAOs, the welfare of the resulting two-level mechanism can be a $\approx \ln n$ factor less than the optimal welfare (in the worst case over DAOs and valuation profiles).  We complement this lower bound with a natural two-level mechanism that achieves a matching approximate welfare guarantee. This upper bound also extends to multi-item auctions in which individuals have additive valuations. Finally, we show that our positive results cannot be extended much further: Even in multi-item settings in which bidders have unit-demand valuations, truthful two-level mechanisms form a highly restricted class and as a consequence cannot guarantee any non-trivial approximation of the maximum social welfare. 
\end{abstract}

\section{Introduction}

In
November~2021, one of the~13 extant original copies of the U.S.\
Constitution was put up for sale by Sotheby's auction house.  Within
days, a DAO (``decentralized autonomous organization'') with over
17000 members formed to crowdsource funds to participate in the
auction.  
This DAO---called ConstitutionDAO, naturally---believed that physical copies of the Constitution should be controlled “by the people,” rather than stuck in private collections. Members of this DAO valued belonging to the collective that wins this auction, and from their perspective the good for sale is therefore non-rivalrous among them
, with a member’s value for winning unharmed by the fact that other members (of the same DAO) win as well.
The members of the DAO were generally anonymous; committed
funds were publicly visible and held in escrow in a smart contract
deployed to the Ethereum blockchain.  All told, the DAO raised roughly
\$47 million leading up to the auction.  Sotheby's sold the copy of
the Constitution using (of course) an ascending auction, with a
designated DAO representative relaying the bids implied by the DAO's
reserves.  Thus, from Sotheby's perspective, ConstitutionDAO was just
like any other bidder, even though in reality it represented the
outcome of coordination of thousands of individuals.
(In the end, ConstitutionDAO lost the auction to Ken
Griffin, CEO of the hedge fund Citadel, and the escrowed funds were
returned to the DAO's participants.)

More generally, the point of a DAO is for people to organize themselves into groups that are programmatically managed. Usually these DAOs are centered around some kind of common cause, for example, forming a social club, collecting art, or funding projects. 
The key innovation that DAOs bring over traditional collectives is that the behavior of the DAO is programmatically enforced via blockchain-secured smart contracts, allowing DAOs to use more complex mechanisms than would traditionally be possible. 
These DAOs may find themselves competing in auctions on behalf of their members. 
In a typical such auction, bids by DAOs might be treated by the seller as
individual bids in (say) a first-price auction, without regard to any
details of the internal group dynamics.

Lest ConstitutionDAO seem like an isolated example, we stress that as blockchains and DAOs become mainstream, this same pattern will likely recur. For example, a DAO concerned with environmental activism could compete in an auction to buy the right to preserve a certain area of land. 
A DAO of musicians could compete for the long-term use of a particular performance venue.
In many of these settings, as long as you are part of the winning DAO, you have a value for the good that is independent of how many other people are part of the DAO. Depending on the application, the good may or may not be excludable (i.e., with the option of excluding select DAO members from access); as our results show, the degree to which the good for sale is excludable 
will have a first-order effect on whether there are mechanisms with good incentive and welfare guarantees.

The goal of this paper is to study auctions in which the bidders are DAOs.
Obvious questions then
include: How should DAO dynamics and internal negotiations be modeled?  Which
key lessons of classical auction theory hold also when bids represent
DAOs, and which ones must be revisited?  Do the mechanism design
problems at the ``upper level'' (the choice of auction) and the ``lower
level'' (the aggregation of preferences of members of a DAO) compose
nicely (\textit{e.g.}, preserving incentive-compatibility), or are there
intricate interactions between them?  Does the aggregation of multiple
individual preferences into a single DAO preference interfere with
natural mechanism design objectives like welfare-maximization, and if
so, by how much?  This paper initiates the study of these questions.

\subsection{Informal Description of Our Model}

Section~\ref{sec:model} details our model; here, we provide an informal
description that is sufficient to understand the overview of results
in the next section.

First, there is an auctioneer that runs an {\em upper-level} mechanism, which
takes as input bids from groups (representing DAOs), and outputs an allocation of items to
groups along with group payments.  This mechanism has no knowledge of
(or is deliberately designed to ignore) the process by which groups'
bids were produced; for all the mechanism knows, each bid was
submitted by an individual bidder. First- and second-price single-item auctions are canonical examples of such mechanisms.

Given the output of the upper-level mechanism, each group must determine
each member's access to then group's winnings, and what to charge each member to cover the overall payment demanded by the
auctioneer.  In the spirit of the Revelation Principle, we model the
result of these determinations as a choice of a (direct-revelation)
budget-balanced cost-sharing mechanism, where the cost to be shared is
the payment demanded by the auctioneer.
In addition to choosing this {\em lower-level} mechanism, a group must decide what to bid (in the
upper-level mechanism), as a function of its members' bids.  We refer to
this mapping (from members' bids to a single group bid) as an {\em
  aggregation function}.

Summarizing, given a choice of upper-level and lower-level mechanisms
(including the aggregation functions), the overall sequence of events
unfolds as follows: (i) each member of each group submits an
individual bid to that group's lower-level mechanism; (ii) each group
maps its members' bids to a group bid via its aggregation function,
which is then submitted to the upper-level mechanism; (iii) the
upper-level mechanism chooses, as a function of the submitted group bids,
an allocation of its items to groups and payments by the groups in
exchange for the allocated items; (iv) each group, having received its
items and a payment request from the upper-level mechanism, uses its
lower-level mechanism to give its members access to the items won and
to share the payment among its members. 

Within this framework, we pursue two goals: (i) dominant-strategy
incentive-compatibility (DSIC); and (ii) (approximate) social welfare
maximization.  By DSIC, we mean a two-level mechanism in which every
member of every group has a dominant strategy, and that strategy is to
bid its true valuation.  For welfare, we consider the valuation
of each group member for the subset of items (its group won
and) to which it has access. The social welfare is the sum of these quantities over all members of all groups.

The core difficulty of this mechanism design problem is aggregating
the preferences of a group and charging payments in an
incentive-compatible way.  For example, if a group always grants all its members
access to all its items and shares the cost evenly,
budget-balanced incentive-compatibility is impossible (due to free
riders underbidding in the hopes that other group members will
shoulder the cost of acquiring a valuable item).  For this reason,
two-level mechanisms with non-trivial guarantees must use lower-level
mechanisms that can exclude group members---presumably the
lower-bidding ones---from accessing some of the items allocated to the
group. For instance, if a DAO of musicians wins access to a performance venue, it may designate a subset of DAO members---intuitively, the members whose bids were actually used to cover the cost of winning the auction---as the only ones eligible to book concerts at the venue.

A second challenge is that distributing payments within a group in
an incentive-compatible way generally precludes the group from
covering payments that match the full welfare of its members.
We prove that this challenge necessarily causes information to be lost
in the aggregation process, which
leads to an unavoidable indistinguishability problem for the upper-level
mechanism and a consequent loss in social welfare.

\subsection{Summary of Results}

We begin with the canonical setting of a single-item auction, and
identify a natural incentive-compatible two-level mechanism that
guarantees an $H_{\ell}$-approximation to the social welfare,
where~$\ell$ denotes the maximum number of bidders in any group and
$H_{\ell} = \sum_{i=1}^{\ell} \tfrac{1}{i} \approx \ln \ell$ the
$\ell$th Harmonic number.  The rough idea is that each group bids the 
maximum amount that can be shared equally among a subset of its members
(subject to individual rationality), with a Vickrey (\textit{i.e.}, second-price) auction serving
as the upper mechanism.  We complement this upper bound with a
matching negative result, assuming only a weak ``equal treatment''
property.\footnote{This property states that if two members of a group
submit identical bids, they should also receive identical allocations
(with either both or neither granted access to the item) and make
identical payments.  This is effectively a symmetry condition on how a
mechanism breaks ties, and it is relevant only for a measure-zero set of
valuation profiles (those in which some valuation is repeated).}
Precisely, every incentive-compatible and individually rational
two-level mechanism that satisfies this property cannot guarantee a
worst-case approximation factor smaller than~$H_{\ell}$.
This lower bound arises from the inability of truthful
mechanisms to elicit payments from groups that match the groups' true
welfare when members of a group have very unequal values.

We then proceed to the multi-item setting. The mechanism of our positive
result for the single-item case extends easily and without degradation to the
setting in which each member of each group has an additive valuation
over items.
However,
the story changes dramatically when we consider the other canonical
``easy case'' for multi-item settings, namely bidders with unit-demand
valuations.  Here, we show that no incentive-compatible and
individually rational two-level mechanism can achieve a
better-than-$n$ approximation of the optimal social welfare (where~$n$ denotes the total number of participants).  The
high-level idea of our proof is to show that incentive-compatibility
is possible in this setting only if there are instances that require
the mechanism to allocate all the items to a single group.  We then
prove that, because the upper mechanism is oblivious to the group
structure (\textit{e.g.}, whether a group represents a single bidder or many),
there will be instances in which such allocations lead to extremely
poor welfare.  This negative result shows that, in particular, the
composition of an incentive-compatible and approximately welfare-maximizing
upper level mechanism (such as the VCG mechanism)
with incentive-compatible and approximately welfare-maximizing lower
mechanisms (such as maximum equal-split cost-sharing) need not lead
to a two-level mechanism with those same properties.
The design of two-level mechanisms with good provable guarantees thus
requires careful coordination between the upper and lower mechanisms,
along with strong restrictions on the set of feasible allocations or the
structure of bidders' preferences.

\subsection{Related Work}
This paper follows in the tradition of a long line of works, beginning
with~\cite{nisan_ronen} and~\cite{lehmann_ocallaghan_shomam}, that
study the problem of incentive-compatible approximate
welfare-maximization under side constraints.  (Without side
constraints, exact incentive-compatible welfare-maximization can be
achieved using the VCG mechanism.)  Like most of these works, we focus
on a prior-free setting, worst-case (over valuation profiles) relative
approximation of the optimal welfare, and dominant-strategy
incentive-compatibility.

Many of the papers in this line belong to the field of algorithmic
mechanism design, in which the side constraints impose bounds on the
amount of computation or communication used by a mechanism (see
e.g.~\cite{roughgarden_talgamcohen}).  For example, in multi-item
(combinatorial) auctions, the size of a bidder's valuation (and hence
the communication used by a direct-revelation mechanism) is generally
exponential in the number of items~$m$.  If a mechanism uses an amount
of communication that is bounded by a polynomial function of~$m$,
bidders will be unable to report fully their valuations and the
mechanism must ultimately make its allocation (and payment) decisions
with incomplete information.  Unsurprisingly, full
welfare-maximization is generally impossible in such settings, even
after setting aside any incentive-compatibility constraints.

Somewhat similarly, in the two-level mechanism framework studied in
this paper, one of the side constraints requires the upper mechanism
to base its allocation (and payment) decisions on incomplete
information (group bids, rather than the individual member bids that
led to those group bids), again precluding any direct-revelation
solution.  Here, however, it is the combination of limited information
{\em and} the incentive-compatibility constraint that rules out exact
welfare-maximization.\footnote{\textit{E.g.}, in a single-item setting, exact
  welfare maximization (without incentive-compatibility) is easy to
  achieve: bidders bid truthfully, each group reports the sum of its
  members' bids, the upper mechanism chooses the highest group bid and
  charges that group its bid, and the winning group then charges all
  its members their bids.} Accordingly, the crux of our lower bound
proofs is to delineate the limitations of incentive-compatible
(two-level) mechanisms, not to identify any intrinsic difficulty of
the underlying optimization problem.  Several papers in algorithmic
mechanism design, such as \cite{papadimitriou_schapira_singer} and
\cite{dobzinski_vondrak_ec12}, face similar challenges when proving
that there are welfare-maximization problems for which
incentive-compatibility constraints degrade the best-possible
approximation factor achievable by a polynomial-time algorithm.
Results in the spirit of Roberts's Theorem~\cite{roberts}, which
characterize the set of incentive-compatible mechanisms for a given
setting, can also have a similar flavor.

Our model is similar to the one in \cite{rachmilevitchauctions}, although that paper has very different goals than the present work. Rather than considering the space of dominant-strategy incentive-compatible mechanisms, as we do here, the paper \cite{rachmilevitchauctions} fixes specific (non-incentive-compatible) auctions for the upper mechanism and aggregation rules and cost-sharing rules in the lower mechanism before characterizing (prior-dependent) equilibrium strategies for the participants. The analysis in \cite{rachmilevitchauctions} is also restricted to the specific setting in which a single group of bidders competes with a single individual bidder. 

The lower mechanisms in our two-level framework are required to be
budget-balanced cost-sharing mechanisms, and there is a large literature on such mechanisms. Naturally, some ideas in our proofs also have precursors in that literature; a few other papers that bear resemblance to the present work are \cite{feigenbaum2000sharing}, \cite{moulin2001strategyproof},
\cite{quantifying_inefficiency}, and \cite{shapley}.  There are two
major differences, however, between the use of cost-sharing mechanisms
in our framework and the settings in which they are traditionally
studied.  
First, in the standard setup, a cost-sharing mechanism chooses an 
outcome that incurs a cost (\textit{e.g.}, the cost of building a bridge) and
the goal is to maximize the social welfare (the total value of the
winning participants for the chosen outcome, minus the cost of that
outcome) or minimize the social cost (the cost of the chosen outcome,
plus the total value of the losing participants).  The cost of an
outcome in this setup is exogenously specified, independent of
participants' bids.  In our two-level framework, the ``cost'' to be
shared is an endogenously specified transfer (from a group to the
auctioneer, as a function of other groups' bids) that does not detract
from the social welfare.  In the traditional setup,
incentive-compatible budget-balanced cost-sharing mechanisms cannot
guarantee any approximation of the optimal social welfare for even the
simplest of problems, and for this reason the social cost objective is
usually considered instead~\cite{quantifying_inefficiency}.  Here,
with costs internal rather than external to the system, a non-trivial
approximation to the social welfare objective is possible (\textit{e.g.}, for
single-item settings).
Second, cost-sharing mechanisms are traditionally studied as
stand-alone direct-revelation mechanisms, whereas here they constitute
one component of a more complex mechanism.  Our results show that
plugging incentive-compatible cost-sharing mechanisms into a two-level
mechanism with an incentive-compatible upper mechanism does not
generally preserve incentive-compatibility (see
Appendix~\ref{app:nocompose}).  This lack of modularity
between the upper and lower mechanisms suggests that the power and
limitations of two-level mechanisms must be studied from first
principles.

Resembling our two-level framework is a sequence of papers on bidding rings and collusion in auctions, namely \cite{graham1987collusive}, \cite{marshall2007bidder}, \cite{leyton2000bidding}, \cite{leyton2002bidding}, and \cite{mcafee1992bidding}. In this line of work, a group of bidders participates in a first- or second-price single-item auction by coordinating among themselves in a \textit{bidding ring} to increase their expected utility. 
Unlike in our model, in which individuals value belonging to the winning group (and hence many can ``win''), in these papers there is only one winning individual.
As a result, these works require transfers within the bidding ring to incentivize agents to join. They also require individuals to have a common prior and compute (non-dominant) equilibrium strategies.
Finally, these works do not consider the welfare loss due to collusion, as we do here.

Also reminiscent of our two-level framework but more distantly related are
works that consider various mechanism design setups with
intermediaries.  For example, \cite{babaioff2016mechanism} consider
facility location problems on trees and assume that strategic agents
report to mediators that then act on their behalf.  In addition to
studying a very different underlying optimization problem, a key
aspect of this paper is that mediators are assumed to be strategic
(whereas the analog in our framework, the lower mechanisms, have no
agency).  A related line of research, motivated by online
advertisement exchange systems, considers auctions in which bidders
report bids to intermediaries who in turn submit bids to a seller
(\textit{e.g.}, \cite{intermediaries} and
\cite{AggarwalBGP22}).  In addition to focusing on strategic
intermediaries, these works are primarily concerned with approximate
revenue-maximization (as opposed to approximate welfare-maximization).

\section{Preliminaries}\label{sec:model}

We consider a setting where an auctioneer is selling a set of items,
$[m]=\{1,...,m\}$ to $k$ distinct groups. We denote group $j$ by $G^j$
and let $G^j$ have $n_j$ bidders, with a total of $n$ bidders across all
of the groups. We only allow bidders to be part of a single group and
assume that the auctioneer only interacts with a group as a whole, with
no insight into the inner group structure. We further assume the
auctioneer is a trusted party who will follow the mechanism as
specified.\footnote{In the context of DAOs, the lower-level mechanism and aggregation functions can be run programmatically on a smart contract, eliminating the need to actually appoint a trusted auctioneer.}

Each item $l$ is constrained to being allocated to a single group but
there are no constraints on how many bidders within that group can have
access to the item. In other words, given that the auctioneer
allocates $l$ to $G^j$, $G^j$ has full autonomy on deciding what
subset of its members get allocated (\textit{i.e.}, granted access to) item $l$. In this sense, a group treats
an allocated item as a public excludable good.

We will refer to the $i$th bidder in group $j$ as bidder $i^j$. Each
bidder has a valuation function for the items they receive
$v_i^j:\mathcal{P}([m])\rightarrow \realspos$. We assume bidders have
quasi-linear utilities where if bidder $i^j$ is allocated a set of items $S_i^j$ and pays $p_i^j$ then $u_i^j(S_i^j,p_i^j)=v_i^j(S_i^j)-p_i^j$. Let $\biddingLang$ denote the bidding language for bidders. $\biddingLang$ will always be expressive enough for bidders to express their true valuation function. Each bidder acts strategically to submit a bid $b_i^j\in \biddingLang$ to their group seeking to maximize their utility.

We consider two main classes of valuation functions for bidders in this work:

\subparagraph*{Additive} Additive bidders have a value for each item and their value for a set of items is the sum of their values for the individual items in that set. 
Formally, for each of the items $l\in[m]$ each bidder $i^j$ has some value $v_i^j(l)\in \realspos$. Then bidder $i^j$'s value for a set $S\subseteq [m]$ of items is $v_i^j(S)=\sum_{l\in S} v_i^j(l)$.

\subparagraph*{Unit-Demand} Unit-demand bidders will have a value for each
item but their value for a set of items will only be the highest value
they have for any item in that set. Formally, for each of the items
$l\in[m]$ each bidder $i^j$ has some value $v_i^j(l)\in
\realspos$. Then $i^j$'s value for a set $S\subseteq [m]$ of items is
$v_i^j(S)=\max_{l\in S} v_i^j(l)$.

In both cases of valuation functions, the bidding language
$\biddingLang$ consists of vectors  $b_i^j\in \realspos^m$ where
$b_i^j(l)$ specifies bidder $i^j$'s bid for item $l$. 

We define a two-level mechanism as consisting of two parts, a lower
and upper mechanism. The upper mechanism is run by the auctioneer and
takes as input bids from each of the groups. The auctioneer then
decides which items should be allocated to which groups and how much
those groups should pay. The upper mechanism falls into the typical
mechanism design framework. The lower mechanism is run by each group
and dictates how the group should aggregate bids from its members into
a group bid. Then, given an allocation of items and a payment request
from the auctioneer, the lower mechanism specifies how a group should
assign items to bidders in the group and how much each bidder should
pay to cover the group's payment to the auctioneer. In this work we will
only consider deterministic mechanisms of this form. 

Formally, a lower mechanism $\lowermech=(a,\loweral,c)$ for a group with $n$ bidders consists of:
\begin{itemize}
\item An \emph{aggregation rule} $a:\biddingLang^n\rightarrow \biddingLang$
  mapping a vector of member bids into a single bid for the
  group. 
We insist that $a$ is the identity function when~$n=1$; Intuitively, if the upper mechanism and lower mechanisms are truthful, there is no reason for the aggregation function to distort the bid of a bidder in a group of size one.
\item A (lower) \emph{allocation rule}
  $\loweral:\biddingLang^n\times \mathcal{P}([m])\times \realspos
  \rightarrow A^l$.
  Each element of $A^l$ specifies which items each bidder has
  access to based on which items the group is allocated and how much
  the group has to pay. We will sometimes refer to a specific lower
  allocation by $\loweral=(x^l_1,...,x^l_{n_j})$ where each $x^l_i$
  returns which set of items $i^j$ is allocated.

\item A \emph{cost-sharing rule}
  $c:\biddingLang^n\times \mathcal{P}([m])\times \realspos \rightarrow
  \realspos^n$
  that specifies how much each bidder in the group has to pay based of
  which items they are given access to and how much the overall group
  has to pay. $c_i^j$ will be the function that specifically denotes
  how much bidder $i^j$ has to pay.
\end{itemize}

An upper mechanism $\uppermech=(\upperal,p)$ with $k$ distinct groups consists of:
\begin{itemize}
    \item An (upper) \emph{allocation rule} $\upperal:\biddingLang^k\rightarrow A^u$ where each element of $A^u$ specifies which items each group is allocated. We constrain the allocation rule such that the mechanism can allocate each item to at most 1 group. We will sometimes refer to a specific upper allocation by $\upperal=(x^u_1,...,x^u_k)$ where each $x^u_j$ returns which set of items group $G^j$ is allocated.  
    \item A \emph{payment rule} $p:\biddingLang^k\rightarrow \realspos^k$ specifying how much the upper mechanism charges each group. $p^j$ will be the function that specifically denotes how much group $G^j$ has to pay. We sometimes abuse notation and use $p^j$ to also refer to the specific amount group $j$ has to pay in a particular instance.
\end{itemize}

We can now formally define a two-level  mechanism. 

\begin{definition}
A two-level mechanism $\mech=(\uppermech,\lowermech)$ is defined by a pair of lower and upper mechanisms.
\end{definition}

An allocation of items to bidders is $\{S_i^j\}_{i,j}$. This implies an allocation of $S^j= \cup_{i=1}^{n_j} S_i^j$ to each group. We say an allocation is feasible if $S^{j_1}\cap S^{j_2}=\emptyset$ for all $j_1,j_2\in [k]$ where $j_1\neq j_2$. This implies that an allocation is feasible as long as no item is allocated to more than one group.

The social welfare of an allocation is given by
$\sum_{j=1}^k\sum_{i=1}^{n_j}v_i^j(S_i^j)$. The optimal social welfare
for an instance $I$, \OPT$(I)$  is the maximum social welfare
obtainable over feasible allocations given the valuation functions specified by $I$.\footnote{In any optimal
  allocation, every group might as well give all of its allocated
  items to all of its members.} We refer to the social welfare a
mechanism $\mech$ achieves in some instance $I$ by $SW(\mech(I))$. We say that a mechanism achieves an $\alpha$ approximation to
optimal welfare if for every instances $I$, we have that $\alpha
\ge \frac{\OPT(I)}{SW(\mech(I))}$. 

Given this setup we seek the following properties from any two-level
mechanism: 

\subparagraph*{Incentive-Compatibility} A mechanism $\mech$ is incentive
compatible if for all instances, each bidder has a dominant strategy
to report their true valuation as their bid. More formally, if $v_{-i}^j$ denote the
values of all bidders apart from $i^j$, then $u_i^j(v_i^j,v_{-i}^j)\ge
u_i^j(\tilde{v}_i^{j},v_{-i}^j)$ for any $\tilde{v}_i^j\in V,
v_{-i}^j\in V^{n-1}$ where $V$ is the set of possible valuations for any bidder. 

\subparagraph*{Budget-Balance} A mechanism $\mech$ is budget balanced if for every group, the payment by a group's members exactly covers the cost charged by the auctioneer, $\sum_{i=1}^{n_j} p_i^j = p^j \ \forall j\in[k]$.

\subparagraph*{Individual Rationality} A mechanism $\mech$ is individually
rational if bidders that bid truthfully always have non-negative
utility. That is, for any bidder $i^j$ that bids truthfully, if $\mech$
outputs an allocation $\{S_i^j\}_{i,j}$ and prices $\{p_i^j\}_{i,j}$,
then $v_i^j(S_i^j)\ge p_i^j$.\\

The following properties (of a two-level mechanism) will also be important for some of our lower
bound proofs:

\subparagraph*{Equal Treatment}
A mechanism $\mech$ satisfies equal treatment if any two bidders in
the same group $i_1^j,i_2^j$ that make the same bids also receive the
same allocation and the same payment. That is, if
$b_{i_1}^j=b_{i_2}^j$, then the allocation $\{S_i^j\}_{i,j}$
and prices $\{p_i^j\}_{i,j}$ output by $\mech$ must satisfy
$S_{i_1}^j=S_{i_2}^j$ and $p_{i_1}^j=p_{i_2}^j$.

\subparagraph*{Consumer Sovereignty} A mechanism $\mech$ satisfies
consumer sovereignty if a bidder can force the mechanism to allocate
it a specific bundle by bidding sufficiently high.  Formally,
for any bidder $i^j$, given bids by all other
bidders $b_{-i}$, there exists some bid $b_i^j$ such that $\mech$
outputs an allocation $\{S_i^j\}_{i,j}$ where $v_i^j(S_i^j)=\max_{S\subseteq [m]}v_i^j(S)$.

\subparagraph*{Upper Semi-Continuity} A mechanism $\mech$ satisfies upper
semi-continuity if for any bidder $i^j$, given bids $b_{-i}$ by the
outputs an allocation where $v_i^j(S_i^j)=\max_{S\subseteq [m]}v_i^j(S)$
other bidders, if $i^j$ is allocated $S_i^j$ for all bids $\tilde{b}_i^j \not= b_i^j$ with
$\tilde{b}_i^j \ge b_i^j$ (component-wise over items), then $i$ is
allocated $S_i^j$ by bidding $b_i^j$ as well. 

\section{Single-Item Mechanisms}
We begin our investigation of two-level mechanisms in the canonical
setting of single-item auctions.
The item can be allocated to one group and that group can 
grant access to the item to any
subset of its members. Even in this
simple setting, we prove that no incentive-compatible mechanism can
achieve a constant-factor approximation of the optimal social
welfare. Instead, we provide a mechanism that achieves a
$H_n\approx \ln n$ approximation and show that this is the best any
truthful mechanism can do. 

Denote the single item by $g$. Then every bidder $i^j$ has a value
$v_i^j\in \realspos$ if they are allocated the item and value 0
otherwise. The bidding language is $\biddingLang=\realspos$.

\subsection{Truthful Mechanism}

We propose a two-level mechanism that is truthful, budget-balanced,
individually rational, and satisfies equal treatment while obtaining a
$H_{\ell}$-approximation of the optimal social welfare (where~$\ell$
denotes the largest number of members of any group and
$H_{\ell} = \sum_{i=1}^{\ell} 1/i$ the $\ell$th Harmonic number). 

Since $b_i^j\in \realspos$, assume without loss of generality that for
each group $G^j$, $b_1^j\ge b_2^j \ge \ldots \ge b_{n_j}^j$. The upper
allocation is represented by a vector $x^u\in \{0,1\}^k$ where
$x^u_j=1$ if group $G^j$ is allocated the item by $\uppermech$ and
$x^u_j=0$ otherwise. Similarly, the lower allocation for group $G^j$
is a vector $x^l\in\{0,1\}^{n_j}$ where $x^l_i=1$ if bidder $i^j$ is
allocated the item and $x^l_i=0$ otherwise.

The lower mechanism aggregates bids by calculating each group's
\emph{willingness to pay}. We define a group $G^j$'s
willingness to pay $\text{WTP}^j=\text{WTP}(b_1^j,\ldots,b_{n_j}^j)$ as the maximum amount the group can pay assuming
that everyone who gets the item will pay the same amount (and no bidder pays more than its bid).\footnote{We will see in Lemma~\ref{l:lb} that this equal payment condition is necessary for incentive-compatibility (modulo some degenerate cases).} That is, the aggregation rule is given by

\begin{equation*}
    a(b_1^j,\ldots,b_{n_j}^j)=\text{WTP}(b_1^j,\ldots,b_{n_j}^j)=\max_{i=1,\ldots,n_j}\{ib_i^j\}.
\end{equation*}

Let $t^j(p)= \max_{i=1,\ldots,n_{j}}\{i \mid ib_i^{j} \ge p
\}$. In words, $t^j(p)$ is the largest number of bidders in group $j$ that could be allocated the item and pay equally for it (without any bidder paying larger than its bid) \textit{assuming the group is charged $p$ by the auctioneer}. Note that, if $p\le \text{WTP}^j$, then $t^j(p)$ is well defined. If group $j$ is
allocated the item by the upper mechanism and charged $p^j$ by the auctioneer, we
define the lower allocation and cost sharing rules as follows for all
$i=1,\ldots,n_j$:

\begin{align*}
    &\loweral_i(b_1^j,\ldots,b_{n_j}^j,S^j,p^j) = \begin{cases}
   1 & \text{ if } i\le t^j(p^j)\\
   0 & \text{ else }
   \end{cases} \\
   & c_i(b_1^j,\ldots,b_{n_j}^j,S^j,p^j) = \begin{cases}
   \frac{p^j}{t^j(p^j)} & \text{ if } i\le t^j(p^j)\\
   0 & \text{ else }
   \end{cases}
\end{align*}

Otherwise, if group $j$ is not allocated the item, we simply have
$\loweral_i =0$ and $c_i=0$ for all $i\in[n_j]$. 

The upper mechanism is a Vickrey auction. It takes all the group bids
and allocates the item to the group with the highest bid and charges
them the second highest group's bid. Formalizing this, given group
bids $b^1,\ldots,b^k$ and letting $j^*$ denote the index of the
highest-bidding group:
\begin{equation*}
    \upperal_j(b^1,\ldots,b^k) = \begin{cases}
    1 & \text{ if } j=j^* \\ 
    0 & \text{ else }
    \end{cases} \ \ \ \ 
    p^j(b^1,\ldots,b^k) = \begin{cases}
    \max_{j\neq j^*}\{b^j\} & \text{ if } j=j^* \\ 
    0 & \text{ else }
    \end{cases}
\end{equation*}
with ties broken arbitrarily.

Informally our mechanism works by having each group calculate their willingness to pay and bid that amount. Then the upper mechanism  runs a standard second price auction. The winning group then splits the cost it is charged equally amongst the largest subset of its agents that it is able to. This subset of agents that are able to equally split the cost are exactly the agents the group gives access of the item to. Formally, the lower and upper mechanisms can be implemented together as follows. For the ease of exposition, any bidders whose
allocations/payments aren't explicitly listed are implied to not be allocated any items and to have zero payment.

\begin{algorithm}[H]\label{mech1}
\SetKwInOut{Input}{Input}
\DontPrintSemicolon
\Input{Bids $b^j=(b_1^j,\ldots,b_{n_j}^j)$ with $b_{i}^j\ge b_{i+1}^j$ for $j\in[k]$, $i\in[n_j-1]$}
\For{j=1,\ldots,k}{
    $\text{WTP}^j\gets \max_{i\in[n_j]}\{ib_i^j\}$\;
}
$j^*\gets \argmax_{j\in[k]}\{\text{WTP}^j\}$\; 
$p^{j^*}\gets \max_{j\neq j^*}\{\text{WTP}^j\}$\;
$i^* \gets \max_{i\in[n_{j^*}]}\{i|ib_i^{j^*} \ge p^{j^*} \}$\;
\For{$i=1,\ldots,i^*$}{
    $x_i^{j^*}\gets 1$\; 
    $p_i^{j^*}\gets \frac{p^{j^*}}{i^*}$\; 
}
\Return{Allocation $\vec{x}$, payments $\vec{p}$ } 
 \caption{ Single-Item Two-level  Mechanism}
\end{algorithm}

\begin{theorem}
Mechanism~\ref{mech1} is truthful, budget-balanced, and individually rational. 
\end{theorem}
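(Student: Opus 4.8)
The plan is to dispatch budget-balance and individual rationality quickly and then spend the effort on truthfulness. Budget-balance is immediate: the winning group $G^{j^*}$ is charged $p^{j^*}$ by the upper mechanism and grants access to exactly its top $i^*=t^{j^*}(p^{j^*})$ members, each paying $p^{j^*}/i^*$, so the members' payments sum to $p^{j^*}$; every other group is charged nothing and charges its members nothing. Individual rationality is nearly as fast: a truthful allocated member $i^{j^*}$ has rank $i \le i^*$, and by the definition of $i^*$ we have $i^* b_{i^*}^{j^*}\ge p^{j^*}$; since bids are sorted, $v_i^{j^*}=b_i^{j^*}\ge b_{i^*}^{j^*}\ge p^{j^*}/i^*$, so its utility is nonnegative, while unallocated members have zero utility.

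For truthfulness I would reduce each member's decision to a single-parameter problem and exhibit an explicit posted price. Fix a member $i^j$ with true value $v$ and fix all other bids. The quantity $P:=\max_{j'\neq j}\text{WTP}^{j'}$ does not depend on $i^j$'s bid; moreover, by the Vickrey rule, $G^j$ wins if and only if $\text{WTP}^j\ge P$, and \emph{when it wins it is charged exactly $P$}. Hence, from $i^j$'s perspective, the only possible outcomes are ``receive the item and pay $P/t^j(P)$'' or ``receive and pay nothing,'' and I must understand how the threshold index $t^j(P)$ and the resulting per-capita charge depend on $b_i^j$.

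The heart of the argument is the following characterization. Let $\beta_1\ge\cdots\ge\beta_{n_j-1}$ be the bids of the other members of $G^j$, adopt the convention $\beta_0=+\infty$, and set $d:=\max\{\,i\in[n_j]:i\,\beta_{i-1}\ge P\,\}$ (a quantity determined entirely by the other bids). I claim that $i^j$ is granted the item precisely when $b_i^j\ge P/d$, and that whenever it is granted the item it pays exactly $P/d$. To prove this I would first unravel the allocation rule: writing $r$ for $i^j$'s rank among $G^j$'s sorted bids, $i^j$ is allocated if and only if either $r\cdot b_i^j\ge P$ (its own position already clears the bar) or $r<d$ (the other members alone push $t^j(P)$ above $r$). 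The key structural point is that $r>d$ together with $r\cdot b_i^j\ge P$ is impossible: maximality of $d$ gives $r\,\beta_{r-1}<P\le r\,b_i^j$, forcing $b_i^j>\beta_{r-1}$ and hence rank strictly less than $r$, a contradiction. Consequently, on the entire allocating range one has $t^j(P)=d$ and the charge is the constant $P/d$; and since $\beta_{d}<P/d\le\beta_{d-1}$ (again from the definition of $d$), the bid $b_i^j=P/d$ places $i^j$ at rank $d$ as the marginal included member, which both pins the critical value at $P/d$ and confirms that the item is won exactly when $b_i^j\ge P/d$.

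With this characterization in hand, truthfulness is the standard posted-price argument: member $i^j$ faces a take-it-or-leave-it price $P/d$ independent of its own report, so bidding $v$ is dominant---it wins (and earns $v-P/d\ge0$) exactly when $v\ge P/d$, and any deviation either forfeits a profitable item or wins at the same price, while a member with $v<P/d$ only loses by overbidding. I expect the main obstacle to be the combinatorial bookkeeping behind the characterization in the previous paragraph---correctly relating $i^j$'s rank, the threshold index $d$, and the per-capita charge, and ruling out the degenerate rank configurations---together with a careful treatment of ties (both in the Vickrey comparison and at the boundary $b_i^j=P/d$), where the equal-treatment tie-breaking must be invoked; the budget-balance and individual-rationality claims are routine by comparison.
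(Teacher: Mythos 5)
Your proposal is correct, and its skeleton matches the paper's: budget-balance and individual rationality are handled exactly as in the paper, and truthfulness is reduced to a single-parameter threshold argument. The difference is in how that reduction is executed. The paper invokes the standard sufficiency condition for single-parameter settings (monotone allocation plus each winner paying its critical bid) and verifies the two conditions by perturbation arguments around the winning profile: raising a winning bid preserves winning, lowering it to exactly $p^{j^*}/i^*$ preserves winning, and lowering it below loses. You instead prove a taxation-principle characterization from scratch: with others' bids fixed, you define $P=\max_{j'\neq j}\text{WTP}^{j'}$ and the index $d=\max\{i : i\beta_{i-1}\ge P\}$ (with $\beta_0=+\infty$) purely from the other bids, and show that bidder $i^j$ is allocated if and only if $b_i^j\ge P/d$, in which case $t^j(P)=d$ and the payment is exactly $P/d$; dominance of truthful bidding then follows immediately without citing any external lemma. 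Your rank-bookkeeping (in particular, ruling out $r>d$ together with $r\,b_i^j\ge P$ via maximality of $d$, and the inequalities $\beta_d<P/d\le\beta_{d-1}$) is sound. What your route buys is self-containedness and an explicit demonstration that the price a member faces is independent of its own report---the crux that the paper's Myerson-style argument establishes only implicitly; what the paper's route buys is brevity. One small correction: the tie-breaking issues you flag (Vickrey ties between groups, and the boundary bid $b_i^j=P/d$) are not resolved by the equal-treatment property, which only constrains identically-bidding members \emph{within} a group; they are handled, as in the paper, by the mechanism's arbitrary tie-breaking and do not affect the dominance argument, since a bidder tied at its threshold is indifferent between winning and losing.
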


\begin{proof}
We first show that the mechanism is truthful. Since this is a single
parameter setting, it suffices to show that the allocation rule is monotone
and that each winning bidder pays their critical bid.

We first show that the allocation rule is monotone. Assume bidder
$i^{j^*}$ is allocated the item by bidding $b_i^{j^*}$. Then if they
increase their bid to $\tilde{b}_i^{j^*}\ge b_i^{j^*}$, we have that
$\text{WTP}^{j^*}$ weakly increases (that is, willingness-to-pay is weakly monotone in any coordinate).
Thus, if group $G^{j^*}$ wins when bidder
$i^{j^*}$ bids $b_i^{j^*}$, then group $G^{j^*}$ will still win when bidder
$i^{j^*}$ bids $\tilde{b}_i^{j^*}$ with all other bidders' bids kept
fixed. Furthermore, $p^{j^*}$ would stay constant since the other
bidders' bids stayed constant, and thus
$b_i^{j^*}\ge \frac{p^{j^*}}{i^*}$ would imply
$\tilde{b}_i^{j^*}\ge \frac{p^{j^*}}{i^*}$. Therefore, bidder $i^{j^*}$ would
still be allocated the item after increasing their bid, showing the
allocation rule is monotone.

We now show that each winning bidder pays their critical bid, by showing that $\frac{p^{j^*}}{i^*}$ is the lowest that bidder
$i^{j^*}$ could drop their bid to while still winning (that is, bidder $i^{j^*}$'s payment is equal to its critical bid). Assume that bidder
$i^{j^*}$ drops their bid from $b_i^{j^*}$ to
$\tilde{b}_i^{j^*}=\frac{p^{j^*}}{i^*}$. By the definition of $i^*$
there are still at least $i^*-1$ other bidders in $G^{j^*}$ with bids
at least $\frac{p^{j^*}}{i^*}$. Thus $G^{j^*}$'s WTP remains at least
$p^{j^*}$ implying that $G^{j^*}$ still wins the upper
auction. Furthermore, bidder $i^{j*}$ would still have at least the
$i^*$th highest bid in $G^{j^*}$. Thus we would still have
$i^*\tilde{b}_i^{j^*}\ge p^{j^*}$ implying that bidder $i^{j^*}$ would
remain part of the winning subset of $G^{j^*}$.

If bidder $i^{j^*}$ drops their bid to $b_i^{j^*}$ below $\frac{p^{j^*}}{i^*}$ then bidder $i^{j^*}$ becomes at best the $i^*$th highest bidder in $G^{j^*}$. Even if $G^{j^*}$ is still the winning group, $G^{j^*}$ still has to pay the same payment $p^{j^*}$. This implies that $\max_{i=1,\ldots,n_{j^*}}\{i|ib_i^{j^*} \ge p^{j^*} \}$ weakly decreases by bidder $i^{j^*}$ decreasing their bid. Thus if bidder $i^{j^*}$ falls below the $i^*$th bid in $G^j$ then they will no longer be part of the winning set. And, even if bidder $i^{j^*}$ does become the $i^*$th highest bidder in $G^j$, then we would have $i^*\tilde{b}_i^{j^*}< i^*\frac{p^{j^*}}{i^*}=p^{j^*}$ implying that bidder $i^{j^*}$ would no longer be part of the winning set. Thus bidder $i^{j^*}$ would never be part of the winning set by dropping their bid below $\frac{p^{j^*}}{i^*}$. This shows that $p_i^{j^*}=\frac{p^{j^*}}{i^*}$ is bidder $i^{j^*}$'s critical bid for all winning bidders $i^{j^*}$. 

Budget-balance is trivial for every losing group since every losing
group is charged 0 by the auctioneer and doesn't have any of its
members make payments. When the winning group is charged $p^j$, it
chooses $i^*$ bidders to pay $\frac{p^j}{i^*}$, showing that
budget-balance holds there as well. Individual rationality follows
since the only bidders that make payments are chosen such that
(assuming truthful bids)
$v_i^j\ge \frac{p^j}{i^*}$ and so $u_i^j=v_i^j-\frac{p^j}{i^*}\ge 0$.
\end{proof}

\begin{theorem}\label{t:hl}
Assuming truthful bidding, Mechanism~\ref{mech1} achieves an
$H_{\ell}$-approximation to the optimal social welfare, where~$\ell$
is the maximum size of a group.
\end{theorem}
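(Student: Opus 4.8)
The plan is to bound the two quantities $\OPT(I)$ and $SW(\mech(I))$ separately against a common intermediate quantity---the winning group's willingness to pay $\text{WTP}^{j^*}$---and then take their ratio. Throughout I assume truthful bidding, so $b_i^j = v_i^j$, and I keep the within-group sorting $v_1^j \ge \cdots \ge v_{n_j}^j$.

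First I would pin down the optimum. By the footnote observation, an optimal allocation hands the item to a single group and lets all of its members access it, so $\OPT(I) = \max_{j} V^j$ where $V^j = \sum_{i=1}^{n_j} v_i^j$. The first key step is a per-group bound $V^j \le H_{n_j}\,\text{WTP}^j \le H_{\ell}\,\text{WTP}^j$. This follows immediately from the definition $\text{WTP}^j = \max_i i\,v_i^j$, which gives $v_i^j \le \text{WTP}^j / i$ for every $i$; summing over $i = 1, \ldots, n_j$ and using $H_{n_j} \le H_{\ell}$ yields the claim. Since the upper (Vickrey) mechanism awards the item to the group $j^*$ of maximum WTP, I get $\OPT(I) = \max_j V^j \le H_{\ell} \max_j \text{WTP}^j = H_{\ell}\,\text{WTP}^{j^*}$.

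Second I would lower-bound the mechanism's welfare by the same WTP, that is $SW(\mech(I)) = \sum_{i=1}^{i^*} v_i^{j^*} \ge \text{WTP}^{j^*}$, where $i^* = t^{j^*}(p^{j^*})$ is the size of the winning subset. The crux here is to compare $i^*$ to the index $m^*$ attaining the maximum, i.e.\ $\text{WTP}^{j^*} = m^* v_{m^*}^{j^*}$. Because the second-price payment satisfies $p^{j^*} \le \text{WTP}^{j^*} = m^* v_{m^*}^{j^*}$, the index $m^*$ lies in the set $\{i : i\,v_i^{j^*} \ge p^{j^*}\}$, so $i^* \ge m^*$. Consequently $SW(\mech(I)) = \sum_{i=1}^{i^*} v_i^{j^*} \ge \sum_{i=1}^{m^*} v_i^{j^*} \ge m^* v_{m^*}^{j^*} = \text{WTP}^{j^*}$, using that the top $m^*$ values are each at least $v_{m^*}^{j^*}$.

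Combining the two bounds gives $\OPT(I)/SW(\mech(I)) \le H_{\ell}\,\text{WTP}^{j^*} / \text{WTP}^{j^*} = H_{\ell}$. I expect the main obstacle to be the second step---specifically, the worry that the second-price payment might shrink the winning subset enough to lose welfare. The inequality $i^* \ge m^*$ is precisely what rules this out: even though only $i^*$ of the winning group's members are served, that subset is always at least as large as the one witnessing $\text{WTP}^{j^*}$, so the served welfare never drops below $\text{WTP}^{j^*}$. Degenerate cases (all values zero, or a single competing group so that $p^{j^*} = 0$ and all members are served) are easily checked to satisfy the bound directly.
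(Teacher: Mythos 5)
Your proof is correct and follows essentially the same route as the paper's: the harmonic-number bound $V^j \le H_{n_j}\,\text{WTP}^j$ is exactly the paper's Lemma~\ref{lem:harmonic} (stated in reciprocal form), and your observation that $i^* \ge m^*$ so the served welfare is at least $\text{WTP}^{j^*}$ is precisely the paper's argument that the winning group realizes value at least its willingness to pay. The only cosmetic difference is that you chain the inequalities through $\max_j \text{WTP}^j$ directly, whereas the paper splits into cases according to whether the winning group is the welfare-optimal one.
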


\begin{proof}
We start with the following lemma giving a lower bound for the WTP of a group compared to the total value that group would get if every member was allocated the item. 

\begin{lemma}\label{lem:harmonic}
$\text{WTP}^j \ge \frac{W^j}{H_{n_j}}$ where $W^j=\sum_{i=1}^{n_j} v_i^j$ and $H_i$ is the $i$th harmonic number.  
\end{lemma}

\begin{proof}
Note that $\text{WTP}^j=\max_{i=1,\ldots,n_j}\{iv_i^j\}$. Thus we have,
\begin{equation*}
     \text{WTP}^j \cdot H_{n_j}=\max_{i=1,\ldots,n_j}\{iv_i^j\}\sum_{i=1}^{n_j} \frac{1}{i} \ge \sum_{i=1}^{n_j} \frac{1}{i}iv_i^j = W^j.
\end{equation*}
\end{proof}

(Returning to the proof of Theorem~\ref{t:hl}.) Note that if $G^j$ is the group that wins the upper
mechanism, then they obtain value at least $\text{WTP}^j$ amongst their group
members by allocating the item within their group. This is because if
$G^j$ wins, we have $\text{WTP}^j\ge p^j$ and so
$i^*= \max\{i|iv_i^j\ge p_j\}\ge
\argmax_{i=1,\ldots,n_j}\{iv_i^j\}$.
Thus, $\text{WTP}^j$ is a lower bound of the value of the $i^*$ bidders with
the highest values in $G^j$ being allocated the item.

Note that the optimal social welfare is achieved by the the group with
the highest $W^j$ to receive the item and for every member of that
group to be allocated that item. Assume WLOG that this is $G^1$ and
some group $G^j$ wins the upper auction. Then we have that the
mechanism achieves welfare at least $\text{WTP}^j$. If $G^j=G^1$ then we are
done by Lemma~\ref{lem:harmonic}; otherwise the fact that $G^j$ won over $G^1$ in the
upper auction implies $\text{WTP}^j\ge \text{WTP}^1\ge \frac{W^1}{H_{n_1}}$. Since
$H_{n_j} \le  H_{\ell}$, we have that the mechanism always achieves a $H_{\ell}$ fraction of the optimal social welfare. 
\end{proof}

In general, the approximation factor achieved by Mechanism~\ref{mech1} is
governed by the ratio of $W^j$ and $\text{WTP}^j$.  If this ratio is known to
be smaller than~$H_{\ell}$---for example, because members of a common
group tend to have similar valuations---then the guarantee of
Theorem~\ref{t:hl} improves accordingly.

\subsection{Lower Bounds}

We now show that the $H_{\ell}$-approximation to welfare achieved by Mechanism~\ref{mech1} is in fact
the best we can hope for from any truthful, budget-balanced mechanism
satisfying equal treatment.  We begin by bounding the maximum amount a
group can be induced to pay compared to their true value for an item
while maintaining incentive compatibility. Then we give a specific
instance in which this occurs and show that any truthful budget
balanced mechanism necessarily has to sometimes allocate items to
lower-valued groups.

\begin{theorem}\label{t:lb}
There is no truthful, budget-balanced, individually rational two-level
mechanism that satisfies equal treatment and guarantees more than an
$H_{\ell}$ fraction of the optimal social welfare (where~$\ell$ is the
maximum group size).
\end{theorem}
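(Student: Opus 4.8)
The plan is to prove the bound in two stages: a structural bound on how much a group can be charged, and an indistinguishability argument that turns this into a welfare gap. I would first isolate the payment bound advertised in the preamble, which is the job of Lemma~\ref{l:lb}: in any truthful, budget-balanced, individually rational mechanism satisfying equal treatment, a group awarded the item pays the auctioneer at most its willingness to pay, $\mathrm{WTP} = \max_t t\, v_{(t)}$, where $v_{(1)} \ge v_{(2)} \ge \cdots$ are the group's sorted member values. The reasoning is that incentive-compatibility together with equal treatment forces every member granted access to pay the same amount (the ``equal payment'' condition), after which individual rationality caps that common per-capita charge by the least value among the accessed members; summing and optimizing over the number of accessed members yields exactly $\mathrm{WTP}$. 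The profile that makes this bound bite is the \emph{harmonic profile} with member values $1, \tfrac12, \ldots, \tfrac1\ell$: its willingness to pay is $\max_t t\cdot \tfrac1t = 1$, whereas granting access to all members yields welfare $W = H_\ell$.

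Next I would record that the upper mechanism, considered on its own, must be a dominant-strategy incentive-compatible single-item auction. Restricting to instances in which every group is a singleton makes each aggregation rule the identity and each cost-sharing rule a pass-through of the auctioneer's charge, so the two-level mechanism collapses onto the upper mechanism; hence the latter is truthful. As a single-parameter IC auction it is monotone in each group's bid, and a winning group in slot $s$ pays its critical price $c_s(\cdot)$ against the competing bids.

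The construction then pits the harmonic group against a lone rival of value $1+\epsilon$ for the single item; the optimum serves the harmonic group and admits all of its members, for welfare $H_\ell$, so it suffices to force the mechanism to hand the item to the rival. Here the payoff of the payment bound is that it pins down the group's \emph{winning behavior} regardless of what its aggregation rule reports: since a winner pays its critical price, and that price may never exceed the group's $\mathrm{WTP}=1$, the group cannot be awarded the item against any competing bid $u$ for which $c_s(u) > 1$. Consequently the group's latent welfare $H_\ell$ is inert---it wins exactly where a single unit bid would win, and never against stronger competition.

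The delicate step, and the one I expect to be the main obstacle, is that the upper mechanism need not be symmetric across slots, so I must locate a slot in which the harmonic group is effectively disfavored. Writing $u^*_s = \sup\{u : c_s(u) \le 1\}$ for the largest competing bid a $\mathrm{WTP}$-$1$ group in slot $s$ could ever beat, I claim $\min(u^*_1, u^*_2) \le 1$: if instead both exceeded $1$, then each slot could beat a competing bid strictly above $1$ while paying at most $1$, and monotonicity would make \emph{both} slots win at the profile $(1,1)$, impossible for a single item. Placing the harmonic group in a slot $j_0$ with $u^*_{j_0} \le 1$ and the value-$(1+\epsilon)$ rival opposite it, any win by the group would require paying $c_{j_0}(1+\epsilon) > 1 = \mathrm{WTP}$, contradicting the payment bound; so the group loses, the realized welfare is at most $1+\epsilon$ (and $0$ if the item goes unawarded, which only helps), while $\OPT = H_\ell$. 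Letting $\epsilon \to 0$ drives the ratio to $H_\ell$. I anticipate the genuinely fiddly part to be tie-breaking at the critical price (where $c_s$ equals $1$ exactly): this is where the equal-treatment / tie-symmetry hypotheses earn their keep, and I would discharge it by an $\epsilon$-perturbation of the competing value together with monotonicity of $c_s$.
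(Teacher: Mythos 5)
Your overall architecture matches the paper's: a payment bound (a winning group can be charged at most its $\mathrm{WTP}$) derived from an equal-payment condition, a reduction via singleton groups showing the upper mechanism must be truthful on group bids, and a threshold/feasibility argument on harmonic profiles. Your single-instance packaging (locating a disfavored slot $j_0$ and placing the harmonic group there) is only cosmetically different from the paper's two symmetric instances, and your feasibility contradiction is the same one the paper uses; note only that it should be staged at a profile with both bids strictly above $1$, e.g.\ $(u_2,u_1)$ with $u_1>1\ge c_1(u_1)$ and $u_2>1\ge c_2(u_2)$, rather than at $(1,1)$, where neither group need win.

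The genuine gap is your justification of the equal-payment condition. You assert that incentive-compatibility together with equal treatment forces every member granted access to pay the same amount, but equal treatment constrains only members who submit \emph{identical} bids; it says nothing about two winners with distinct bids, so it does not by itself rule out truthful, budget-balanced cost-sharing rules that charge distinct winners unequally (e.g.\ schemes with asymmetric per-member thresholds summing to $p^j$). This is exactly the nontrivial content of the paper's Lemma~\ref{l:lb}, which the paper proves by observing that, conditioned on winning, the group's payment is bid-independent, so the lower mechanism is a truthful budget-balanced cost-sharing mechanism, and then invoking the characterization of \cite[Theorem 3.4]{shapley} to pin it down to maximum equal-split cost shares---and only outside a Lebesgue-measure-zero set of valuation profiles. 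That caveat also matters for your construction: the exact harmonic profile $1,\tfrac12,\ldots,\tfrac1\ell$, where all products $i\,v_{(i)}$ tie at $1$, is precisely the kind of degenerate profile the characterization can exclude, which is why the paper perturbs to $v_i=\tfrac1i-\delta$. Your closing remark about $\epsilon$-perturbing the rival's value does not address this, since the degeneracy sits inside the group's own profile. With the lemma imported as a black box and the group profile perturbed, the rest of your argument goes through.
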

\begin{proof}
First, we can restrict attention to mechanisms that satisfy consumer
sovereignty.
If a mechanism doesn't satisfy consumer sovereignty,
there exists an instance in which there is a bidder in some group
that will never be allocated the item regardless of what they bid. Since
the threshold price that a bidder needs to pay to be allocated the item is
not a function of their bid for truthful mechanisms, this must hold
regardless of the bidder's valuation.  Letting that bidder's valuation
tend to infinity then shows that worst-case welfare approximation of
the mechanism is arbitrarily bad.

Next,
note that because of the restriction that the aggregation function
must be the identity function in the case in which a group only has 1
bidder (see Section~\ref{sec:model}), for the mechanism to be
truthful, we must have that the upper mechanism is truthful with
respect to group bids to be truthful with respect to individual bidders. Given this, we show the following result
constraining the cost-sharing rule within the winning group.  

\begin{lemma}\label{l:lb}
In single-item settings,
except possibly on a set of valuation profiles with Lebesgue measure
zero, a truthful and budget-balanced mechanism that satisfies equal
treatment and consumer sovereignty
must always charge all winning bidders the same payment.

\end{lemma}
\begin{proof}
Note that winning bidders must all come from the same group. Then,
because the upper mechanism is a truthful single-item auction,
conditioned on $G^j$ winning the item, the payment $p^j$ that $G^j$
has to make is independent of their bid $b^j$ and hence
$(b_1^j,\ldots,b_{n_j}^j)$. Thus in a truthful, budget-balanced
mechanism, $G^j$ wins if and only if $\lowermech$ is such that the
payments its cost-sharing rule charges to winning bidders can cover
the group's threshold payment $t^j(b^{-j})$. This makes $\lowermech$ a
truthful, budget-balanced cost sharing mechanism where $\mech$ being
truthful and budget-balanced implies that $\lowermech$ will always be
able to cover the cost it is asked to. 

Next, we invoke a characterization result from~\cite[Theorem
3.4]{shapley} that implies that, except possibly on a set of
valuation profiles with Lebesgue measure zero, $\lowermech$ must be
the same lower mechanism as in Mechanism~\ref{mech1}, meaning that it
identifies the maximal subset of bidders $S\subset G^j$ such that each bidder
$i^j\in S$ has  $b_i^j\ge \frac{p^j}{|S|}$ and charges them all
$\frac{p^j}{|S|}$. (Note that~$S$ is uniquely defined, as the union of
two sets satisfying this property also satisfies that 
property.)  In particular, every winning bidder makes the same payment. 
\end{proof}

(Returning to the proof of Theorem~\ref{t:lb}).  
Consider now an incentive-compatible two-level mechanism in which the
lower mechanisms satisfy equal treatment.
Recall that $\text{WTP}^j$ as the largest amount that $G^j$ can pay assuming all the winning bidders pay the same amount.
By Lemma~\ref{l:lb}, the lower mechanisms must charge winning bidders a common amount (subject to
individual rationality), so $\text{WTP}^j$ is the maximum payment that could possibly be made by group $G^j$ (except possibly on a measure-zero set of valuation profiles).

With this fact in hand, we consider two different instances, both
with two groups.  We assume that these instances do not fall into
the measure-zero set of valuation profiles mentioned above; this
assumption can always be enforced through arbitrarily small
perturbations to the valuations, if needed.\\
\noindent
\textbf{Instance 1:} $G^1$ has $n-1$ bidders with $v_i^1=\frac{1}{i}-\delta$ for some $\delta>0$ and $G^2$ has one bidder with $v_1^2=1$.\\
\noindent
\textbf{Instance 2:}  $G^1$ has one bidder with $v_1^1=1$ and $G^2$ has $n-1$ bidders with $v_i^2=\frac{1}{i}-\delta$ for some $\delta>0$.\\
\noindent
We refer to Instance 1 by $I_1$ and Instance 2 by $I_2$. Note that the
optimal welfare in both instances is $H_{n-1}$ where all the bidders
in $G^1$ win in $I_1$ and all the bidders in $G^2$ win in $I_2$. If
$G^2$ wins in $I_1$ or $G^1$ wins in $I_2$ then the mechanism will
only achieve a welfare of 1. Assume there is a truthful,
budget-balanced two-level mechanism $\mech$ that achieves an
approximation factor better than $H_{n-1}$.  Then, assuming that
$\delta$ is sufficiently close to~0, $\mech$ must have $\uppermech$
choose $G^1$ in $I_1$ and $G^2$ in $I_2$.

Since $\uppermech$ must be truthful with respect to group bids, it can be characterized by threshold payments $t^1(b^2)$ and $t^2(b^1)$ where $G^1$ winning implies $b^1\ge t^1(b^2) $ and $b^2 \le t^2(b^1)$ and $G^2$ winning implies the opposite. Since $a$ is the identity function for groups with a single member, we have in $I_1$ that $b^2=1$ and in $I_2$ that $b^1=1$. We claim that $G^1$ winning in $I_1$ and $G_2$ winning in $I_2$ imply $\max\{t^1(1),t^2(1)\}\ge 1$.

Assume otherwise, and let $\max\{t^1(1),t^2(1)\}=y$ with $y<1$. Then letting $\epsilon >0$ such that $y+\epsilon <1$, it follows that if $b^1=y+\epsilon$ in $I_1$ and $b^2=y+\epsilon$ in $I_2$ then we would still have $G^1$ win in $I_1$ and $G^2$ win in $I_2$. However this would imply $t^1(y+\epsilon)\ge 1 \implies t^1(1) < t^1(y+\epsilon)$ contradicting the monotonicity and hence truthfulness of $\uppermech$. 

Hence $\max\{t^1(1),t^2(1)\}\ge 1$ implies that either $G^1$ pays at
least 1 in $I_1$ or $G^2$ pays at least 1 in $I_2$. However in $I_1$,
we have $\text{WTP}(G^1)=1-\delta$ and in $I_2$, we have
$\text{WTP}(G^2)=1-\delta$. Thus by the above lemma, there is no truthful
budget-balanced mechanism satisfying equal treatment that will choose
both $G^1$ to win in $I_1$ and $G^2$ to win in $I_2$ implying that no
truthful, budget-balanced mechanism satisfying equal treatment can do
better than a $H_{n-1}$ approximation factor.
\end{proof}

This lower bound shows that the $H_{\ell}$-approximation of Mechanism~\ref{mech1} is in fact tight and the best any incentive-compatible two-level
mechanism could hope to do in this setting.

\section{Multi-Unit Mechanisms}
In this section, we move beyond the single-item setting and consider
the case where the auctioneer is selling multiple items. Bidders can
now have combinatorial valuations over the different
items. In the case where bidders have additive valuations across
the different items we show that our positive result for the single-item
settings (Theorem~\ref{t:hl})
extends naturally. 

The main result of this section shows that in general---and even with
unit-demand valuations---
we can't hope for any non-trivial approximation 
to optimal social welfare. In particular, we show that no truthful
budget-balanced mechanism can do better than an $n$-approximation to
social welfare in this setting, under very weak assumptions on the
aggregation function.

\subsection{Additive Valuations}

For each of the items $l\in[m]$, each bidder $i^j$ has some value
$v_i^j(l)\in \realspos$. Then $i^j$'s value for a set $S\subset [m]$
of items is $v_i^j(S)=\sum_{l\in S} v_i^j(l)$. The bidding language
$\biddingLang$ consists of vectors  $b_i^j\in \realspos^m$ where
$b_i^j(l)$ specifies bidder $i^j$'s value for item $l$. We will use
$b^j(l)$ to refer to the vector of values all the bidders in $G^j$
have for item $l$.

To extend our upper bound for single-item settings
(Theorem~\ref{t:lb}) to the setting of additive valuations, it is
sufficient to run an independent copy of Mechanism~\ref{mech1} for each of
the items:

\begin{algorithm}[H]\label{mech2}
\SetKwInOut{Input}{Input}
\DontPrintSemicolon
\Input{Bids $b^j=(b_1^j,...,b_{n_j}^j)$ for $j=1,...,k$;\\ Single-Item
  Mechanism $\mech$ (Mechanism~\ref{mech1})}
\For{l=1,...,m}{
    $(\tilde{x},\tilde{p})\gets \mech(b^1(l),...,b^k(l))$\;
    \For{$j=1,...,k$}{
        \For{$i=1,...,n_j$}{
            $p_i^j \gets p_i^j + \tilde{p}_i^j$\;
            $x_i^j(l) \gets \tilde{x}_i^j$\;
        }    
    }
}

\Return{Allocation $\vec{x}$, payments $\vec{p}$ } 
 \caption{ Additive Item Two-level  Mechanism}
\end{algorithm}

\begin{theorem}
Mechanism~\ref{mech2} is truthful, budget-balanced, individually rational and achieves a $H_{\ell}$ approximation to the optimal social welfare for bidders with additive valuations. 
\end{theorem}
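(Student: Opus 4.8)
The plan is to exploit the fact that, under additive valuations, both the mechanism and the welfare benchmark decompose completely across items, reducing everything to the single-item guarantees already established. Write $\mech$ for Mechanism~\ref{mech2}, and for each item $l \in [m]$ let $\mech_l$ denote the copy of Mechanism~\ref{mech1} run on the bid profile $(b^1(l), \ldots, b^k(l))$. The crucial structural observation is that the allocation $x_i^j(l)$ and the payment contribution $\tilde p_i^j$ produced for item $l$ depend only on the bids $b^1(l),\ldots,b^k(l)$ for that item, and never on the bids for any other item $l' \neq l$; Mechanism~\ref{mech2} thus never couples items through a shared budget or allocation constraint.

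First I would establish truthfulness. Since each bidder's valuation is additive, its utility decomposes as $u_i^j = \sum_{l=1}^m \bigl(v_i^j(l)\,x_i^j(l) - p_i^j(l)\bigr)$, where $p_i^j(l)$ is the payment charged to $i^j$ by $\mech_l$. Because the item-$l$ term depends only on $i^j$'s bid coordinate $b_i^j(l)$ (with all other coordinates held fixed), the bidder can optimize each coordinate in isolation. By the truthfulness of Mechanism~\ref{mech1}, bidding $b_i^j(l) = v_i^j(l)$ maximizes the $l$th term regardless of the other bidders' reports; summing over $l$, truthful bidding maximizes $u_i^j$, so it is a dominant strategy. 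Budget-balance and individual rationality follow from the same per-item decomposition: for each $l$, Mechanism~\ref{mech1} guarantees $\sum_i p_i^j(l) = p^j(l)$ and (under truthful bids) $v_i^j(l)\,x_i^j(l) \ge p_i^j(l)$, and summing these identities and inequalities over $l$ yields budget-balance and IR for $\mech$.

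The remaining step is the $H_\ell$-approximation, which relies on the additive structure of the welfare benchmark. I would argue that $\OPT(I) = \sum_{l=1}^m \OPT(I_l)$, where $I_l$ is the single-item instance induced by the item-$l$ valuations: because valuations are additive and each item may be allocated to groups independently, an optimal allocation is obtained by allocating each item separately and (as noted earlier) giving each allocated item to all members of its winning group. Likewise, the realized welfare satisfies $SW(\mech(I)) = \sum_{l=1}^m SW(\mech_l(I_l))$. Applying Theorem~\ref{t:hl} to each single-item instance gives $SW(\mech_l(I_l)) \ge \OPT(I_l)/H_\ell$; summing over $l$ yields $SW(\mech(I)) \ge \OPT(I)/H_\ell$, as required.

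I expect no serious obstacle here: the entire argument is driven by the separability that additivity grants. The only points requiring a little care are confirming that the per-item subproblems are genuinely independent (they are, since feasibility is imposed item by item and no budget is shared across items) and making the coordinate-wise optimization in the truthfulness argument fully explicit, so that dominance transfers from Mechanism~\ref{mech1} to $\mech$ term by term.
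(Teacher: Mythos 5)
Your proof is correct and takes essentially the same approach as the paper: both arguments rest on the complete per-item decomposition of allocations, payments, utilities, and the welfare benchmark under additive valuations, reducing truthfulness, budget-balance, individual rationality, and the $H_{\ell}$ bound to the single-item guarantees of Mechanism~\ref{mech1} (Theorem~\ref{t:hl}). Your write-up just makes explicit (coordinate-wise optimization, summing the per-item identities and inequalities, $\OPT(I)=\sum_l \OPT(I_l)$) what the paper states more tersely.
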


\begin{proof}
Because a bidder's bid on one item has no affect on the
allocation of or payments for any other item, truthfulness of
Mechanism~\ref{mech2} follows straightforwardly from the truthfulness of
Mechanism~\ref{mech1}.

Similarly, the budget-balance and individual rationality properties of
Mechanism~\ref{mech2} follow easily from those of Mechanism~\ref{mech1} (and, in fact,
hold on an item-by-item basis).

In the welfare-maximizing allocation, each item goes to the group that
has the highest total value for that item and has all of its members
allocated that item. From the analysis of Mechanism~\ref{mech1}, we have that it
allocates a given item to a set of bidders that have total value at
least a 1/$H_{\ell}$ fraction of the value any group has for that
item. Because bidders have valuations that are additive across items,
this implies that the welfare achieved by Mechanism~\ref{mech2} is within an
$H_{\ell}$ factor of the optimal welfare.
\end{proof}

\subsection{Unit-demand Valuations}
We now consider bidders with unit-demand valuations.
For each of the items $l\in[m]$, each bidder $i^j$ has some value
$v_i^j(l)\in \realspos$. Then $i^j$'s value for a set $S\subset [m]$
of items is $v_i^j(S)=\max_{l\in S} v_i^j(l)$. The bidding language
$\biddingLang$ consists of vectors  $b_i^j\in \realspos^m$ where $i^j$
specifies their bid for each item. 
In line with the auctioneer (of the upper mechanism) choosing a
mechanism that is agnostic to group-specific idiosyncrasies, 
we require some kind of assumption that precludes a group from using
its bid to signal information about its inner structure as opposed
to a reasonable aggregation of its members' preferences.  
Many different such assumptions would be sufficient for our purposes;
for concreteness, assume from now that the output of an aggregation
function on a specific item must be bounded by a fixed but arbitrary
function of the its inputs for that item.  That is,

there must exist some function
$f:\realspos^n\rightarrow \realspos$ such that for any item $l$ and
group $G^j$, $a(b_1^j,...,b_{n_j}^j)(l)\le
f(b_1^j(l),...,b_{n_j}^j(l))$.  

Within this setting we show that no truthful two-level mechanism can
do better than a $n$ approximation to the optimal welfare. The issue
mechanisms in this setting face is not being able to distinguish whether
a group bid representing a high value for many different items comes
from that group having many bidders with disparate preferences or from
a single unit demand bidder who is agnostic to which item they
receive. In the interest of maintaining truthfulness, the mechanism is
often forced to assume the group is composed of multiple disparate
members and allocate to that group all the items they have high value
for. However to remain truthful in the case where this valuation
actually came from a single bidder with high value, the auctioneer
can't charge more for a large bundle than it would charge for its
individual components. In cases where different groups have similar
preferences over the same items, this can cause only one of the groups
being allocated the entire set of items hence harming the welfare in
the case where these actually were just individual unit-demand
bidders.  We now proceed to making these ideas precise.

\begin{theorem}
No truthful two-level mechanism can achieve better than a $n$ fraction of the optimal welfare.
\end{theorem}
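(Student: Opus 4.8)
The plan is to weaponize the upper mechanism's obliviousness to group structure. I will construct a single group bid that is consistent both with a group of many unit-demand members having disparate single-item tastes and with one unit-demand bidder who is content with any item, and then argue that the response $\uppermech$ is forced to give under the first interpretation is catastrophic under the second. The engine is that $\uppermech$ cannot see which of the two situations it is in, so whatever it does in the ``disparate'' realization carries over verbatim to the ``single-bidder'' realization.

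First I would record two reductions. Because the aggregation rule is the identity on singletons and $\mech$ is DSIC, $\uppermech$ must be DSIC with respect to the group bids, exactly as in the proof of Theorem~\ref{t:lb}; in particular, $\uppermech$'s allocation to a group depends only on that group's aggregate bid together with the other groups' bids. Second, I would argue that it suffices to treat mechanisms satisfying consumer sovereignty: if some member can never be granted its favorite item regardless of its report, then (the threshold it faces being independent of its own bid under truthfulness) sending that member's value to infinity while keeping a cheap competitor for the same item drives the approximation above any bound, as in Theorem~\ref{t:lb}.

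The heart of the argument is a forcing step. Fix items and consider a group $G$ whose members are \emph{disparate}: member $i$ values only item $i$. Consumer sovereignty, applied to each member separately, says member $i$ can command access to item $i$; since each per-item aggregate is governed (through the bound $a(\cdot)(l)\le f(\cdot)$) by the bids on that item alone, the members can simultaneously drive $G$'s aggregate high on every coordinate, and the DSIC mechanism $\uppermech$ must then award $G$ the \emph{entire} set of items --- otherwise some member fails to obtain its item, contradicting consumer sovereignty. Let $\gamma=a(G)$, and add for each contested item $l$ a competing singleton $C_l$ wanting only item $l$, valued just below $\gamma_l$, so that $G$ still outbids every $C_l$ and wins everything; in this disparate realization the mechanism is in fact optimal. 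Now invoke obliviousness: replace $G$ by a single unit-demand bidder $B$ with valuation $\gamma$, which submits the identical group bid $\gamma$. Seeing the same profile, $\uppermech$ again hands $B$ all the items, so $B$ realizes only the value of its best single item while every $C_l$ is shut out. The optimum instead seats each $C_l$ at its own item and $B$ at a spare item, for welfare roughly $n$ times larger; since this realization has $n$ participants, the approximation ratio approaches $n$.

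I expect the crux to be making the forcing step fully rigorous and uniform over all admissible aggregation functions $f$. Concretely, I must show that member-level consumer sovereignty together with DSIC of $\uppermech$ genuinely compels the full set of items to be allocated to $G$ --- ruling out, for instance, a maximum-weight-matching response that would hand each $C_l$ its item and leave the disparate members unserved (which would be harmless in both realizations) --- and I must confirm that the two realizations are truly indistinguishable to $\uppermech$, i.e.\ that $B$'s identity bid $\gamma$ coincides with $a(G)$ and that the competitors' bids are literally unchanged, so that the pinned-down allocation transfers. Threading individual rationality and budget-balance through the competitors' payments, and handling the measure-zero tie cases via the upper semi-continuity condition, are the remaining technical points.
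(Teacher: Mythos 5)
Your opening reductions (consumer sovereignty without loss of generality, DSIC of $\uppermech$ with respect to group bids via the identity aggregation on singletons) and your forcing step (a disparate group must be awarded all of $[m]$) do track the paper's first moves---the forcing step is exactly Lemma~\ref{lem:[m]}, though as you suspect it needs the payment-cap argument (set each member's value above $\max_{S\in A_u^j} p^j(S)$, so the deviation guaranteed by consumer sovereignty yields strictly positive utility, contradicting DSIC), not consumer sovereignty alone. The genuine gap comes after that: you try to extract the $n$-factor loss directly from a single indistinguishable pair, and the welfare ratio your instance yields is $\bigl(\sum_l \gamma_l\bigr)/\max_l \gamma_l$ where $\gamma = a(G)$. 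Nothing forces the coordinates of $\gamma$ to be comparable: the aggregation rule is chosen by the mechanism designer and is only bounded \emph{above} by $f$, so on the disparate profile it may output, say, $\gamma=(M,0,\dots,0)$. Then your competitors $C_l$, pegged ``just below $\gamma_l$,'' are worthless for $l\neq 1$; the mechanism is optimal in the disparate realization and near-optimal in the single-bidder realization (welfare $\approx M$ either way), and no contradiction arises. There is also a circularity you flag but do not resolve: whether $G$ is forced to win everything depends on the payment caps in the profile that \emph{includes} the $C_l$'s, and those payments may scale with the competitors' bids (hence with $\gamma$, hence with the members' bids); ``$G$ outbids every $C_l$, so it wins everything'' has no force for an arbitrary truthful upper mechanism.

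The paper closes this gap with an intermediate structural step that your proposal skips. It uses the disparate-group-versus-single-bidder indistinguishability not to build the final bad instance, but to prove that for \emph{every} item $l$ there is a bid under which the group receives all of $[m]$ while $l$ is the top coordinate of its aggregate bid; crucially, the competitor's value there is pegged to the a-priori bound $g(x)=\max_{l\neq j}f(\vec{z}^l)$, computable from $f$ and $x$ alone, rather than to the unknown aggregate---which is exactly what defeats the unbalanced-aggregation counterexample above. From this it deduces price monotonicity ($S\subset T$ implies $p^j(S)\le p^j(T)$) and the reverse bound $p^j(S)\ge p^j([m])$ for all nonempty $S$, hence flat pricing $p^j(S)=p^j([m])$; flat pricing in turn forces each group to receive either a set containing its favorite item or nothing at all. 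The $n$-factor loss then comes from a separate, clean instance: $n$ singleton unit-demand bidders, each valuing $n$ items at $(1+\epsilon,1,\dots,1)$, so that only one group can receive the universally favorite item and the mechanism's welfare is $1+\epsilon$ against an optimum of $n+\epsilon$. That final instance makes no reference whatsoever to aggregate bid values, which is precisely the robustness to arbitrary aggregation functions that your construction lacks.
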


\begin{proof}
We can assume any mechanism that achieves at least a $n$-approximation to the optimal welfare satisfies consumer sovereignty. Since, if $\mech$ does not satisfy consumer sovereignty, there must exist an instance where there is some item $l$ that bidder $i^j$ can't receive regardless of how high they bid. In this case let $v_i^j(l)\rightarrow \infty$. Then $\mech$ will have an arbitrarily bad approximation factor to optimal welfare.

Throughout this proof we will use $A_u^j$ to refer to the set of possible sets of items the upper mechanism can allocate to $G^j$. 

We start by showing that any truthful mechanism satisfying consumer sovereignty must have an upper mechanism that is able to allocate the set of all items to any group. 

\begin{lemma}\label{lem:[m]}
Any truthful two-level mechanism that satisfies consumer sovereignty must satisfy $[m]\in A_u^j$ for all $j$.
\end{lemma}

\begin{proof}
Let $G^j$ have $m$ unit-demand bidders, with bidder $i^j$'s valuation function $v_i^j(l)=x$ if $i=l$ and otherwise $v_i^j(l)=0$ where $x$ is an arbitrary constant. Now fix the bids by the other groups. Since $\uppermech$ must be truthful with respect to group bids, conditioned on $\uppermech$ allocating a set $S\in A_u^j$ of items to $G^j$, $G^j$ pays the same amount $p^j(S)$ regardless of its bid $b^j$. Using this along with consumer sovereignty, we claim there exists a value of $x$ such that $\uppermech$ must allocate $[m]$ to $G^j$ to be truthful. 

Assume that $\uppermech$ does not allocate $[m]$ to $G^j$. It follows that there is some item $l$ not allocated to $G^j$ which further implies that $l^j$ will have 0 utility in this allocation. However by consumer sovereignty, there exists a bid $b_l^j$ that bidder $l^j$ can make such that $\uppermech$ will allocate some set $S$ to $G^j$ where $l\in S$ and furthermore $l^j$ will be allocated $l$ by $G^j$ in $\lowermech$. As we noted before the price $\uppermech$ can charge to $G^j$ for $S$ is some fixed price $p^j(S)$, not a function of $b^j$. Thus consider the case where $x=\max_{S\in A_u^j}\{p^j(S)\}+1$. By budget-balance, $\lowermech$ can charge bidder $l^j$ at most $p^j(S)$ given that $G^j$ is charged $p^j(S)$ by $\uppermech$. Thus bidder $l^j$ can misreport their bid to be allocated $l$ by $\lowermech$, and as a result get a utility at least $\max_{S\in A_u^j}\{p^j(S)\}+1-p^j(S) > 0$. This shows a profitable non-truthful deviation for bidder $l^j$. This implies that when $x=\max_{S\in A_u^j}\{p^j(S)\}+1$, $G^j$ must be allocated $[m]$. Since we arbitrarily fixed the bids by the other groups, it follows that we must always have $[m]\in A_u^j$ for all $j$ when $\mech$ is truthful and satisfies consumer sovereignty. 
\end{proof}

Lemma~\ref{lem:[m]} shows that for a truthful mechanism $\mech$ there must always be some bid $b^j$ that $G^j$ can make to be allocated all of $[m]$. We build upon this to show that in fact, for $\mech$ to be truthful, for any arbitrary choice of bids $b^{-j}$, for every item $l$ there exists some bid $b^j$ where $l$ is the highest value item in $b^j$ and $G^j$ is still allocated all of $[m]$. This shows that in such an instance, if $G^j$ is actually comprised of a single bidder with valuation function $b^j$, they will still be allocated all of $[m]$ even though they are perfectly happy just receiving $l$. This will be used to cap the amount the mechanism can charge a group for the entirety of $[m]$ compared to any single item to remain truthful.  

\begin{lemma}
For any truthful two-level mechanism $\mech$ that achieves at least a $n$-approximation to welfare, for all items $l=1,...,m$, there always exists a $b^j$ such that $x^u_j(b^1,...,b^k)=[m]$ and $l=\argmax_{l\in[m]}b^j(l)$.
\end{lemma}

\begin{proof}
Following closely to the previous lemma, consider the following class of groups, parameterized by $j$. Let group $\tilde{G}^j$ be defined by having $m$ bidders where each bidder prefers distinct items but the $j$th bidder in $\tilde{G}^j$ has a much larger value for their item compared to the other bidders in the group. 
Formally, for $j=1,...,m$ let group $\tilde{G}^j$ have $m$ bidders where the unit demand valuation of bidder $i^j$ is $v_i^j(l)=y$ if $i=j=l$, $v_i^j(l)=x$ if $i\neq j$ and $i=l$, and otherwise $v_i^j(l)=0$ with $y$ to later be defined as a function of $x$. Then fix the bids by other groups and let $x=\max_{S\in A_u^j}\{p^j(S)\}+1$. Let the group bid for $\tilde{G}^j$ be $\tilde{b}^j$ and let $\vec{z}^i\in\realspos^m$ be vectors parameterized by $i=1,...,m$ where $\vec{z}^i(l)=x$ for $l=i$ and otherwise $\vec{z}^i(l)=0$. 
By our assumption that the group bid for an item is bounded by some function of the individual bidders bids for an item, we have that there exists some function $f$ such that $\tilde{b}^j(l)\le f(\vec{z}^l)$ for $l\neq j$. Now let $g(x)=\max_{l\neq j}\{f(\vec{z}^l)\}$ and set $y=n^2(g(x)+x)$. Notably this means that the $j$th bidder in $\tilde{G}^j$ has much higher value for their item than the other bidders. We claim this implies that $\tilde{b}^j(j)\ge g(x)$. 

Assume otherwise that $\tilde{b}^j(j)<g(x)$. Then define $G'$ to be a group consisting of a single unit demand bidder. Let this bidder have a valuation of $v'(l)=ng(x)$ if $l=j$ and otherwise $v'(l)=0$.

Now consider two instances both with two groups and $m$ items:\\
\noindent
\textbf{Instance 1:} $G^1$ is defined as $\tilde{G}^j$ above, and $G^2$ is defined as $G'$.\\
\noindent
\textbf{Instance 2:} $G^1$ is a single unit demand bidder with valuation $\tilde{b}^j$, and $G^2$ is defined as $G'$.
\noindent
We will refer to Instance 1 as $I_1$ and Instance 2 as $I_2$.  In $I_1$ the optimal allocation is to give all the items to $G^1$ for a welfare of $n^2(g(x)+x)+(m-1)x$, and in $I_2$ the optimal allocation is to give item $j$ to $G_2$ and everything else to $G^1$ for a welfare of at least $ng(x)$.

Thus, if $\uppermech$ allocates item $j$ to $G_1$ then in $I_2$ the welfare is less than $g(x)$ implying an approximation factor strictly greater than $\frac{ng(x)}{g(x)}=n$. If $\uppermech$ allocates item $j$ to $G_2$ then in $I_1$ the welfare is at most  $ng(x)+(m-1)x$ giving an approximation factor of $\frac{n^2(g(x)+x)+(m-1)x}{ng(x)+(m-1)x} >n$, since $n>m$. However, because the aggregation function is the identity for single bidder groups, $I_1$ and $I_2$ are indistinguishable to $\uppermech$ and $\uppermech$ being deterministic implies it must allocate item $j$ to the same group in both instances. Thus if $\tilde{b}^j(j)<g(x)$ then $\mech$ must have a worst case approximation factor worse than $n$. 

Note that $\tilde{b}^j(j)\ge g(x)$ implies that $j\in\argmax_{l\in [m]}(b^j(l))$. Furthermore from how we defined $x$ and $y$, we must have that $\uppermech$ allocates all of $[m]$ to $\tilde{G}^j$ following the proof from the previous lemma. Thus taking any fixed group bids $b^{-j}$. We can chose $G^j$ to be defined as any of $\tilde{G}^i$ for $i=1,..,m$ such that for all items $l=1,..,m$ there exists a group bid $b^j$ causing $G^j$ to be allocated all of $[m]$ while having $l= \argmax_{l'\in[m]}\{b^j(l)\}$.
\end{proof}

Note that $\uppermech$ can't distinguish between the case when $G^j$ consists of a single unit demand bidder and when $G^j$ consists of multiple unit demand bidders. We show that this restrains the payments $\uppermech$ can charge for allocating sets of items to groups by considering the case where every group is a single unit demand bidder. In characterizing the amount $\uppermech$ can charge $G^j$ for allocating $G^j$ a set $S$ of items, let the bids by all other groups be fixed. Then as before we denote the amount $\uppermech$ charges $G^j$ for $S$ as $p^j(S)$. We start by showing that any truthful $\uppermech$ can't charge less for a superset of another set of items. In the following lemmas we will refer to the unit demand valuations of the single bidder in $G^j$ by $v^j$. 

\begin{lemma}
For any truthful, two-level mechanism $\mech$, let $S,T\in A_u^j$. Then $S\subset T$ implies that $p^j(S)\le p^j(T)$.
\end{lemma}

\begin{proof}
$p^j(S)$ and $p^j(T)$ must remain constant regardless of the inner structure of $G^j$, thus without loss of generality assume $G^j$ consists of one bidder.  Then for the sake of contradiction, assume that $p^j(S)>p^j(T)$. $S\in A_u^j$ implies that there exists some valuation $v^j$ such that $G^j$ is allocated $S$ and charged $p^j(S)$. However $T\in A_u^j$ implies there is also an alternative bid $v^{j'}$ that $G^j$ could make to be allocated $T$ instead. Since $v^j$ is a unit demand valuation, we have that $v^j(T)\ge v^j(S)$. Thus $p^j(S)>p^j(T)$ would imply that $G^j$ profits by reporting $v^{j'}$ over $v^j$ making $\mech$ not truthful.  
\end{proof}

We now show the more surprising statement 
that any truthful $\uppermech$ can't charge more for $[m]$
than it would charge for any set $S$ of items. 
This
result stems from the fact that at times $\uppermech$ has to allocate
multiple items to a group that could be a unit demand bidder. Thus to
maintain truthfulness, if $G^j$'s favorite item is in $S$ then $G^j$
can't be charged more for $[m]$ than it would have been for $S$ since
it has the same value for both sets.  

\begin{lemma}
For any truthful two-level mechanism $\mech$ and set $S\in A_u^j$ where $ S\neq \emptyset$, we have $p^j(S)\ge p^j([m])$ 
\end{lemma}

\begin{proof}
As in the previous lemma, we can assume without loss of generality that $G^j$ consists of 1 bidder. We have shown that $[m]\in A_u^j$ for any truthful mechanism. Thus there exists some valuation $v^j$ such that $\uppermech$ allocates $[m]$ to $G^j$. Let $l^*=\argmax_{l\in [m]}\{v^j(l)\}$. It follows that for any set $S\in A_u^j$ such that $l^*\in S$ that $v^j(S)=v^j([m])$. $S\in A_u^j$ also implies there exists some bid $v^{j'}$ $G^j$ could make to be allocated $S$. Thus if $p^j(S)< p^j([m])$ we would have that when $G^j$'s value is $v^j$, $G^j$ has a profitable deviation by bidding $v^{j'}$ instead making $\mech$ not truthful. By lemma 10 we have that there exist valuations $v^j$ such that $\uppermech$ allocates $[m]$ to $G^j$ where $l^*=l$ for any $l\in [m]$. Thus as long as $S\neq \emptyset$, we have that $p^j(S)\ge p^j([m])$.
\end{proof}

\begin{corollary}
In any truthful mechanism $\mech$, for any 2 sets $S,T\in A_u^j$ we have $p^j(S)=p^j(T)$ $\forall j=1,...,k$. 
\end{corollary}

This follows from the fact that for any set $S\in A_u^j, p^j(S)\le p^j([m]),$ $p^j(S)\ge p^j([m])\implies p^j(S)=p^j([m])$. Finally we provide an instance that shows that no $\uppermech$ that charges the same amount for all sets can do better than a $n$ approximation of the optimal welfare.

\begin{lemma}
If $\mech$ is truthful and $\uppermech$ has $p^j(S)=p^j(T)$ $\forall S,T\in A_u^j, j=1,...,k$ then $\mech$ can not achieve a welfare approximation better than $n$.
\end{lemma}

\begin{proof}
$\uppermech$ being truthful with respect to group bids implies that the allocation $A^j$ it gives to group $G^j$ must satisfy $A^j=\argmax_{A\in A_u^j}\{v^j(A)-p^j(A)\}$. We have that $p^j(A)$ is a constant for all $A\in A_u^j$, unless $A=\emptyset$ where $p^j(A)=0$ by individual rationality, so let $p^j(A)=p^j$. However this implies that if $v^j(A)>p^j$ for any set $A$ then $A^j=\argmax_{A\in A_u^j}\{v^j(A)\}$. Let $A^{*j}\in \argmax_{A\in A_u^j}\{v^j(A)\}$. Then if $G^j$ is not allocated a set $S$ where $v^j(S)=v^j(A^{*j})$, we have that $v^j(A^{*j})\le p^j$. This implies if $v(A)<v(A^{*j})$ then $v(A)<p^j$. Thus if $\uppermech$ does not give $G^j$ a set $S$ where $v^j(S)=v^j(A^{*j})$, $\uppermech$ must give $G^j$ no items at all. 

Given this consider the following  instance with $n$ identical groups where every group consists of a single unit demand bidder. There are $n$ total items with each bidder in each group having an identical valuation vector of $v_1^j=(1+\epsilon,1,...,1,1)$. Call the first item that every bidder has value $1+\epsilon$ for $l_1$. Here we have that $v(S)=v(A^{*j})$ if and only if $l_1\in S$. However note that $\uppermech$ can only allocate $l_1$ to one of the groups. Let this group be $G^{j'}$. Then by the observation above we have that any groups $G^{j}$ where $j\neq j'$ must not be allocated any items at all. Thus the maximum welfare any truthful $\uppermech$ can achieve in this scenario is to allocate every item to $G^{j'}$ for a welfare of $1+\epsilon$. However, since every group is a single unit demand bidder, the optimal allocation is to give every group a unique item for a welfare of $n+\epsilon$. Thus as $\epsilon\rightarrow 0$ we get that no $\mech$ where $\uppermech$ charges equal prices for every set of goods can achieve better than a $n$ approximation of the optimal welfare. 
\end{proof}
Thus since every truthful upper mechanism must charge the same prices for every subset of items in its range, we have that no truthful mechanism can achieve an approximation factor better than $n$.
\end{proof}

\newpage

\begin{appendices}

\section{Truthful Mechanisms Don't Compose}\label{app:nocompose}
We give here an explicit example showing that a two-level mechanism
$\mech$ in which $\uppermech$ implements a truthful auction mechanism
and $\lowermech$ implements a truthful cost sharing mechanism doesn't
necessarily imply that $\mech$ itself is truthful two-level
mechanism. To see this, we will consider a simple instance with
unit-demand bidders and consider the two-level mechanism $\mech$
where 
$\uppermech$ implements VCG and $\lowermech$ implements the 
``maximum equal-split'' cost shares used in Mechanism~1 (see
Section~3), which are arguably the most canonical
choices for the upper and lower mechanisms.
We show that, no matter what aggregation
function is used, this two-level mechanism cannot be truthful.

Consider the following instance with 2 groups $G^1$ and $G^2$
competing for 2 items $r$ and $s$ where $G^1$ has 2 agents and $G^2$
has 1 agent. Let the agents' valuations be $v_1^1=(10+2\epsilon,0),
v_2^1=(5-\epsilon,\epsilon), v_1^2=(10,\epsilon)$ with the first
indices and second indices referring to agents' values for $r$ and $s$
respectively. Then let $\mech$ be a two-level mechanism where
$\uppermech$ implements VCG and $\lowermech$ implements maximum
equal-split cost shares.
Then since we would have $b^2=(10,\epsilon)$,
regardless of the aggregation function used to create $b^1$,
$\uppermech$ will always give at least one of the items to $G^2$ as
the welfare maximizing outcome according to group bids. Then by the
VCG payment rule, in the case that $G^1$ wins $r$ $p^1=10-\epsilon$,
and in the case that $G^2$ wins $s$, $p^1=0$.

With our
choice of cost-sharing,
all winning agents in a group pay the
same amount. Thus, since agent $2^1$'s value for $r$ is only
$5-\epsilon$, $\lowermech$ will only allocate $r$ to agent $2^1$ if
$p^1\le 10-2\epsilon$. Thus $p^1>10-2\epsilon$ implies that agent
$2^1$ is not allocated $r$ and has 0 utility in this case. However, by
consumer sovereignty, there exists some bid agent $2^1$ can make that
causes $G^1$ to be allocated item $s$ instead. Then $p^1=0$ so both
agents in $G^1$ will be allocated $s$ making $u_2^1=\epsilon >0$. Thus
no truthful mechanism can give $G^1$ only item $r$. However, if
$\uppermech$ only gives $G^1$ item $s$, then agent $1^1$ is guaranteed
to have 0 utility. Thus again by consumer sovereignty, there exists
some bid $b_1^1$ agent $1^1$ can make such that $G^1$ is allocated $r$
instead. In this case agent $1^1$ has to pay at most $10-\epsilon$ by
budget balance making $u_1^1\ge
10+2\epsilon-10+\epsilon=3\epsilon>0$. Thus regardless of whether the
aggregation function causes $\uppermech$ running VCG to give $G^1$
either $r$ or $s$, there always exists a profitable non-truthful
deviation by either agent $1^1$ or agent $2^1$ showing that $\mech$ is
never truthful in this setting.

\end{appendices}

\newpage

\bibliographystyle{plainurl}
\bibliography{bibliography}

\end{document}